\documentclass[final]{nextgame2026}

\usepackage[T1]{fontenc}
\usepackage{mathtools}
\usepackage{booktabs}
\usepackage{enumitem}
\usepackage{url}

\title[Optimism as a Vulnerability]{Optimism as a Vulnerability: Deceptive Stackelberg Control of UCB Bandit Followers}

\optauthor{%
\Name{{\c{S}}uayp Talha Kocabay} \Email{kocabaysuayptalha08@gmail.com}\\
\addr Independent Researcher
\AND
\Name{Kerem Yal\c{c}\i{n}} \Email{54keremyalcin@gmail.com}\\
\addr Independent Researcher
\AND
\Name{Talha R\"uzgar Akku\c{s}} \Email{talharuzgarakkus@gmail.com}\\
\addr Independent Researcher}

\newtheorem{assumption}[theorem]{Assumption}

\newcommand{\AL}{A_L}
\newcommand{\AF}{A_F}
\newcommand{\DeltaL}{\Delta(\AL)}
\newcommand{\UL}{U_L}
\newcommand{\UF}{U_F}
\newcommand{\br}{\mathrm{BR}}
\newcommand{\sse}{\mathrm{SSE}}

\newcommand{\one}{\mathbf{1}}

\begin{document}

\maketitle

\begin{abstract}
Upper Confidence Bound (UCB) algorithms guarantee sublinear regret for agents learning unknown stochastic environments, yet the same principle that makes them statistically efficient---optimism in the face of uncertainty---induces a predictable strategic vulnerability against an omniscient adaptive leader. Classical strong Stackelberg equilibrium (SSE) assumes that the follower immediately best-responds to the leader's committed mixed action; it therefore supplies no mechanism-design prescription for a leader facing a boundedly rational follower who constructs and acts on empirical reward histories. We formalize this conflict in a finite-horizon repeated Stackelberg game and give exact constructive proofs for a deceptive leader mechanism. In a honeypot phase, the leader pays a finite signaling cost to inflate the UCB index of a designated follower action. In a trap phase, the leader switches to a selfish action distribution while the follower remains locked into the designated action because the manipulated empirical history and exploration bonus dominate competing indices. Under explicit separation and payoff assumptions, the leader's cumulative utility strictly exceeds the classical SSE ceiling, and the manipulation cost is bounded by a regret calculation of order $O(\sqrt{T\ln T})$. The results identify a formal incompatibility between static equilibrium prescriptions and dynamically learned empirical incentives.
\end{abstract}

\begin{keywords}
Stackelberg games, bandit learning, UCB, strategic deception, reward manipulation
\end{keywords}

\section{Introduction}

Stackelberg security games and related leader--follower models typically analyze commitment to a mixed action followed by a rational best response \citep{von2010leadership,korzhyk2011stackelberg,conitzer2006computing,tambe2011security}. In the strong Stackelberg equilibrium convention, the follower breaks ties in favor of the leader; the leader consequently solves a static optimization problem over induced best responses. This model is internally coherent when the follower observes payoffs and responds as a utility maximizer. It is not a model of a follower who learns payoffs from interaction.

Bandit-learning followers instantiate a different behavioral primitive. UCB1 \citep{auer2002finite,lattimore2020bandit,slivkins2019introduction} is no-regret in stationary stochastic bandits: arms with uncertain value receive an optimism bonus, and suboptimal arms are sampled only logarithmically often. In a strategic environment, however, reward samples are not exogenous evidence about a fixed arm. They are data produced by another player. This places the model closer to learning in games and nonstationary multi-agent learning \citep{fudenberg1998theory,cesa2006prediction,shoham2007multiagent,nowe2012game} than to one-shot commitment. An adaptive leader can therefore treat the follower's statistical estimator as an object of control.

The closest technical literature is reward or action poisoning of bandit learners, where an external attacker corrupts feedback or actions to force target pulls at small perturbation cost \citep{jun2018adversarial,liu2019data,liu2020action,ma2023adversarial,balasubramanian2024cmab,wang2024stealthy,zuo2023ucb}. Our mechanism differs because the leader does not edit rewards exogenously; it creates the reward stream endogenously through legal Stackelberg play. Verification and corruption-aware bandits \citep{rangi2021secure,lykouris2018stochastic,xu2024robustts,kim2026heteroskedastic} and Stackelberg learning with manipulative or non-myopic agents \citep{haghtalab2022nonmyopic,yu2024decentralized,birmpas2021deceiving,nguyen2019imitative} motivate the defensive discussion below.

This paper makes the preceding claim precise. We compare two leader models in a repeated finite Stackelberg game. The baseline leader commits myopically to a classical SSE action and assumes immediate best response. The deceptive leader instead first rewards a target follower action $j^\star$ to raise its empirical mean, then switches to a leader-favorable action distribution for which $j^\star$ is no longer follower-optimal. The follower nevertheless continues selecting $j^\star$ for a calculable number of rounds because its UCB index remains larger than all competitors.

Our contribution is theoretical. We do not claim that every Stackelberg game admits profitable deception. We identify a verifiable class of games satisfying payoff separation, targetability, and exploitability assumptions, and prove that in this class the deceptive leader obtains strictly more cumulative utility than the static SSE ceiling. The proof exposes the failure mode: static mechanism design optimizes against the best-response correspondence $y\in\arg\max_{a\in\AF}\UF(x,a)$, whereas an empirical follower implements a stateful map from histories to actions.

\section{Model}

Let $\AL=\{1,\ldots,m\}$ and $\AF=\{1,\ldots,n\}$ be finite action sets. A $T$-step repeated Stackelberg game is specified by mean utilities
\[
  \UL,\UF:\AL\times\AF\to[0,1].
\]
At each time $t$, the leader selects $x_t\in\DeltaL$, the follower selects $y_t\in\AF$, and the realized follower reward is
\[
  R^F_t=\UF(x_t,y_t)+\eta_t,
\]
where $(\eta_t)$ is conditionally mean-zero and $\sigma$-sub-Gaussian; the deterministic model is $\sigma=0$. The leader's payoff is its expected utility $\UL(x_t,y_t)=\sum_i x_t(i)\UL(i,y_t)$. The main theorem assumes the leader observes the follower's sufficient statistics $(N_j,\hat\mu_j)_j$ and knows both payoff matrices; Appendix~\ref{app:partial} records what changes under action-only observation. The follower observes only its realized action and scalar reward.

\begin{definition}[Classical SSE value]
For a mixed leader action $x$, define the follower best-response set
\[
  \br(x)=\arg\max_{j\in\AF}\UF(x,j).
\]
The strong Stackelberg value is
\[
  V_{\sse}
  =
  \max_{x\in\DeltaL}\max_{j\in\br(x)} \UL(x,j).
\]
\end{definition}

The SSE abstraction imposes $y_t\in\br(x_t)$ at every round. This converts the interaction into a static commitment problem. The follower studied here instead uses UCB1. Let $N_j(t)=\sum_{s\le t}\one\{y_s=j\}$ and let $\hat\mu_j(t)$ be the empirical mean of rewards observed by the follower when it played $j$ up to time $t$. Initially every arm is pulled once, or equivalently $N_j(0)=0$ and the algorithm uses an initialization schedule over $n$ rounds. Thereafter, for exploration parameter $c>0$,
\[
  y_t\in
  \arg\max_{j\in\AF}
  \left[
    \hat\mu_j(t-1)
    +c\sqrt{\frac{\ln t}{N_j(t-1)}}
  \right],
\]
where unplayed arms have index $+\infty$.

The crucial difference is that $\br(x_t)$ is a function of the current leader mixture, whereas UCB is a function of the entire empirical history. Thus, a leader can manipulate the sufficient statistics $(\hat\mu_j,N_j)_{j\in\AF}$ before changing $x_t$.

\section{Deceptive Leader Mechanism}

Fix a target follower action $j^\star\in\AF$. The mechanism has two phases. During the honeypot phase the leader selects mixtures that give high follower reward to $j^\star$. During the trap phase the leader selects a selfish exploitative mixture $x^{\mathrm{exp}}$ that maximizes the leader's payoff conditional on the follower choosing $j^\star$.

\begin{assumption}[Targetability]
There exists a honeypot mixture $x^{\mathrm{hon}}\in\DeltaL$ and numbers $\alpha,\beta\in[0,1]$ with $\alpha>\beta$ such that
\[
  \UF(x^{\mathrm{hon}},j^\star)=\alpha,
  \qquad
  \UF(x^{\mathrm{hon}},j)\le \beta
  \quad\forall j\ne j^\star.
\]
\end{assumption}

\begin{assumption}[Exploitability and follower harm]
There exists $x^{\mathrm{exp}}\in\DeltaL$ and a constant $\rho\in[0,1]$ such that
\[
  \UF(x^{\mathrm{exp}},j^\star)=\rho,\qquad
  \max_{j\ne j^\star}\UF(x^{\mathrm{exp}},j)=\rho+\gamma_F
\]
for some follower suboptimality gap $\gamma_F>0$,
and
\[
  L^\star:=\UL(x^{\mathrm{exp}},j^\star)>V_{\sse}.
\]
\end{assumption}

The second condition is the source of profit and deception: $j^\star$ is strictly suboptimal for the follower under $x^{\mathrm{exp}}$, yet unusually valuable for the leader.

\begin{algorithm2e}
\caption{Deceptive Leader Mechanism}
\KwData{horizon $T$, target $j^\star$, UCB parameter $c$, mixtures $x^{\mathrm{hon}},x^{\mathrm{exp}}$}
Initialize the follower by allowing one sample of each action\;
\For{$t=n+1,\ldots,n+\tau_1$}{
  play $x_t=x^{\mathrm{hon}}$\;
}
\For{$t=n+\tau_1+1,\ldots,T$}{
  play $x_t=x^{\mathrm{exp}}$\;
}
\end{algorithm2e}

\section{Honeypot Inflation}

Let $t_0=n$ denote the end of forced initialization. Suppose every non-target action has bounded initialization history and $j^\star$ has initial count $N_0\ge1$ and empirical mean $\hat\mu_0$. The honeypot phase repeatedly plays $x^{\mathrm{hon}}$ until $j^\star$ has $\tau_1$ additional observations of reward $\alpha$.

\begin{lemma}[Exact target mean after honeypot]
If $j^\star$ has $N_0\ge 1$ initialization samples with empirical mean $\hat\mu_0$, then after $\tau_1$ honeypot samples,
\[
  \hat\mu_{j^\star}(t_0+\tau_1)
  =
  \frac{N_0\hat\mu_0+\tau_1\alpha}{N_0+\tau_1}.
\]
Consequently, for any $\varepsilon>0$,
\[
  \hat\mu_{j^\star}(t_0+\tau_1)\ge \alpha-\varepsilon
  \quad\text{whenever}\quad
  \tau_1\ge N_0\frac{\alpha-\hat\mu_0-\varepsilon}{\varepsilon}.
\]
\end{lemma}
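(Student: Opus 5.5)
The proof is a direct computation of an empirical mean, followed by solving a linear inequality in $\tau_1$. The first step is to fix what each honeypot observation is. During the honeypot phase the leader plays $x^{\mathrm{hon}}$. By the lemma's hypothesis, $j^\star$ receives exactly $\tau_1$ additional samples, each equal to $\UF(x^{\mathrm{hon}},j^\star)=\alpha$ by Targetability. This uses the deterministic model $\sigma=0$, or reads "observations of reward $\alpha$" as the paper does in the preceding paragraph. The statement is phrased in terms of samples of $j^\star$, so rounds in which the follower pulls other arms do not enter. They change neither $N_{j^\star}$ nor $\hat\mu_{j^\star}$, and we only evaluate $\hat\mu_{j^\star}$ after its $N_0+\tau_1$ samples. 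I will note explicitly that the time index $t_0+\tau_1$ is understood as the time at which these $\tau_1$ samples have been collected.

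The second step is the sum decomposition. The reward sum over $j^\star$'s first $N_0$ samples is $N_0\hat\mu_0$ by the definition of the empirical mean. The $\tau_1$ honeypot samples contribute $\tau_1\alpha$. Dividing by the total count $N_0+\tau_1$ gives the displayed formula.

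The third step is the threshold. If $\hat\mu_0\ge\alpha-\varepsilon$, the mean is a convex combination of $\hat\mu_0$ and $\alpha$, both at least $\alpha-\varepsilon$, so the bound holds for every $\tau_1\ge0$. The stated condition is then vacuous, since its right-hand side is nonpositive. Otherwise, the inequality
\[
\frac{N_0\hat\mu_0+\tau_1\alpha}{N_0+\tau_1}\ge\alpha-\varepsilon
\]
can be multiplied through by $N_0+\tau_1>0$. Rearranging gives $\tau_1\varepsilon\ge N_0(\alpha-\hat\mu_0-\varepsilon)$, which is equivalent to the stated bound because $\varepsilon>0$. Each step is reversible, so the condition is in fact necessary and sufficient.

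There is no real obstacle. The only point needing care is the bookkeeping in step one: making precise that exactly $\tau_1$ deterministic $\alpha$-samples of $j^\star$ are added, independent of any interleaved pulls of other arms.
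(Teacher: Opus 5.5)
Your proposal is correct and matches the paper's proof: you split the reward sum into $N_0\hat\mu_0+\tau_1\alpha$, multiply the inequality by $N_0+\tau_1>0$, cancel $\tau_1\alpha$, and divide by $\varepsilon$, and you treat the case $\alpha-\hat\mu_0-\varepsilon<0$ as trivial, as the paper also does. Your extra remarks are sound details the paper leaves implicit: the reliance on deterministic $\alpha$-rewards, the irrelevance of interleaved pulls of other arms, and the reversibility of each step, which makes the threshold necessary as well as sufficient.
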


\begin{proof}
Appendix~\ref{app:proofs} gives the algebra.
\end{proof}

\begin{theorem}[Minimum honeypot length for strict UCB dominance]
Fix a desired post-honeypot time $s=t_0+\tau_1+1$. Suppose each non-target action $j\ne j^\star$ has $1\le N_j(t_0)\le B_N$ and empirical mean at most $B_\mu$. After $\tau_1$ honeypot samples of $j^\star$, the target UCB index strictly dominates every non-target index at time $s$ if and only if
\[
  \frac{N_0\hat\mu_0+\tau_1\alpha}{N_0+\tau_1}
  +c\sqrt{\frac{\ln s}{N_0+\tau_1}}
  >
  B_\mu+c\sqrt{\ln s}\max_{j\ne j^\star}\frac{1}{\sqrt{N_j(t_0)}}.
\]
In particular, a sufficient explicit condition is
\[
  \tau_1
  \ge
  \min\left\{q\in\mathbb{N}:
  \frac{N_0\hat\mu_0+q\alpha}{N_0+q}
  +c\sqrt{\frac{\ln(t_0+q+1)}{N_0+q}}
  >
  B_\mu+c\sqrt{\ln(t_0+q+1)}
  \right\},
\]
where the right-hand side uses the conservative bound $N_j(t_0)\ge1$.
\end{theorem}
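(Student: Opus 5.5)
The plan is to write the target index at time $s$ in closed form using the preceding lemma, and to bound every non-target index from above by the uniform right-hand side. Two facts about the honeypot phase are needed first. During the $\tau_1$ honeypot rounds the follower pulls $j^\star$, which is the regime in which the theorem is stated. The non-target statistics are therefore frozen at their initialization values: $N_j(s-1)=N_j(t_0)$ and $\hat\mu_j(s-1)=\hat\mu_j(t_0)\le B_\mu$ for every $j\ne j^\star$. For the target, $N_{j^\star}(s-1)=N_0+\tau_1$, and the exact mean $\frac{N_0\hat\mu_0+\tau_1\alpha}{N_0+\tau_1}$ follows from the lemma on the target mean after the honeypot. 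In the deterministic model ($\sigma=0$) every honeypot sample equals $\alpha$ exactly, so no concentration argument is required.

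Next I substitute these quantities into the UCB1 rule at time $s$. The target index is exactly the left-hand side of the displayed inequality. For each $j\ne j^\star$,
\[
\hat\mu_j(s-1)+c\sqrt{\frac{\ln s}{N_j(t_0)}}\le B_\mu + c\sqrt{\ln s}\max_{k\ne j^\star}\frac{1}{\sqrt{N_k(t_0)}}.
\]
Hence the displayed inequality implies that the target index is strictly larger than every competitor, which proves the ``if'' direction.

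The ``only if'' direction is the delicate step. It cannot hold for arbitrary histories, since the arm that attains $B_\mu$ need not be the arm with the smallest count. I would read $B_\mu$ as the attained empirical mean and the bound as tight in the worst case over histories compatible with the hypotheses. Concretely, the configuration to use is one in which a single non-target arm simultaneously has mean $B_\mu$ and count $\min_j N_j(t_0)$. In that configuration the right-hand side equals that arm's actual index, so strict dominance forces the inequality. I will state this worst-case reading explicitly in the proof, because the biconditional is only meaningful uniformly over the admissible class.

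For the sufficient explicit condition, I use $N_j(t_0)\ge 1$, which gives $\max_j 1/\sqrt{N_j(t_0)}\le 1$. This makes $B_\mu+c\sqrt{\ln s}$ a valid upper bound on every competitor index. Taking $\tau_1$ equal to the minimal $q$ in the set makes the inequality hold at $s=t_0+q+1$, and the ``if'' direction then gives dominance. One caveat remains for a $\tau_1$ strictly larger than that minimum. The target side is not obviously monotone in $q$: the mean rises toward $\alpha$ while the bonus decays like $\sqrt{\ln q/q}$, and meanwhile the competitor bonus $c\sqrt{\ln s}$ keeps growing. I would therefore state the condition for $\tau_1$ equal to the minimizer. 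Equivalently, I would read ``$\ge\min$'' as meaning that $\tau_1$ lies in the set, and note that the set is nonempty only if $\alpha>B_\mu$ asymptotically fails to be beaten by the competitor bonus, which happens precisely when the target's empirical mean exceeds the competitors' means by more than the competitor bonus. Handling this monotonicity is the step I expect to be the main obstacle.
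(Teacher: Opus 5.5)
Your argument for the part that is actually true is the paper's own. You write the target index exactly via the mean lemma, freeze the non-target statistics at $t_0$ (every honeypot round goes to $j^\star$), bound each competitor index by $B_\mu+c\sqrt{\ln s}\max_{k\ne j^\star}N_k(t_0)^{-1/2}$, and use $N_j(t_0)\ge 1$ for the conservative form. Where you differ is that the paper's proof simply asserts that strict dominance ``is equivalent to'' the displayed inequality, and that replacing the maximum by $1$ yields the sufficient condition. You decline both claims, and you are right to.

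The ``only if'' direction is false as literally stated. Take a single competitor with $N_j(t_0)=1$ and $\hat\mu_j(t_0)<B_\mu$. Any target index strictly between $\hat\mu_j(t_0)+c\sqrt{\ln s}$ and $B_\mu+c\sqrt{\ln s}$ gives dominance without the inequality. With several competitors, the arm attaining $B_\mu$ also need not have the smallest count. Your worst-case reading is the version that is true: dominance for every admissible assignment of competitor means $\hat\mu_j(t_0)\le B_\mu$, with the counts given. Your argument proves it completely, since the worst case puts mean $B_\mu$ on the minimum-count arm, whose index then equals the right-hand side.

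On the explicit condition, you should be more decisive than ``not obviously monotone.'' The set in the display is always finite, so ``$\tau_1\ge\min$'' is false as a sufficient condition, not merely unproven. Its left side is a convex combination of $\hat\mu_0$ and $\alpha$ plus a bonus $c\sqrt{\ln(t_0+q+1)/(N_0+q)}\to 0$, so it is bounded. Its right side $B_\mu+c\sqrt{\ln(t_0+q+1)}$ diverges.

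Dominance itself also fails for all large $\tau_1$, for any fixed competitor statistics. Each frozen competitor's bonus grows like $\sqrt{\ln s}$ while the target index tends to $\alpha$; this is just UCB re-exploring a frozen arm. So requiring $\tau_1$ to lie in the set, as you propose, is the correct statement rather than a workaround.

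Finally, drop your closing sentence on nonemptiness. Asymptotically the competitor bonus always beats $\alpha$, so nonemptiness is a finite-$q$ condition. It holds exactly when some $q$ satisfies $\frac{N_0\hat\mu_0+q\alpha}{N_0+q}-B_\mu>c\sqrt{\ln(t_0+q+1)}\bigl(1-(N_0+q)^{-1/2}\bigr)$.
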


\begin{proof}
Appendix~\ref{app:proofs} expands the index comparison term by term.
\end{proof}

For stochastic rewards, Appendix~\ref{app:stochastic} gives a high-probability analogue obtained by subtracting uniform sub-Gaussian confidence radii from the target index and adding them to the competing indices.

\section{Trap Duration}

Let $\tau=t_0+\tau_1$ be the switch time. During exploitation the follower receives reward $\rho$ whenever it plays $j^\star$. Let $N_\star=N_0+\tau_1$ and let
\[
  M_\star=N_0\hat\mu_0+\tau_1\alpha.
\]
If the follower keeps selecting $j^\star$ for $q$ exploitation rounds, then at time $\tau+q+1$ its target empirical mean is
\[
  \bar\mu_\star(q)=\frac{M_\star+q\rho}{N_\star+q}.
\]
The corresponding target index is
\[
  I_\star(q)=
  \bar\mu_\star(q)+c\sqrt{\frac{\ln(\tau+q+1)}{N_\star+q}}.
\]
For a non-target action $j$, define the frozen comparator index
\[
  J_j(q)=\hat\mu_j(t_0)+c\sqrt{\frac{\ln(\tau+q+1)}{N_j(t_0)}}.
\]

\begin{theorem}[Exact lock-in duration]
Assume deterministic rewards. Conditional on strict dominance at the switch, the exact number of consecutive exploitation rounds during which UCB selects $j^\star$ is
\[
  \Delta
  =
  \max\left\{
    q\in\{0,\ldots,T-\tau\}:
    I_\star(r)> \max_{j\ne j^\star}J_j(r)
    \text{ for all } r=0,\ldots,q-1
  \right\}.
\]
Equivalently, the first escape time is
\[
  q_{\mathrm{esc}}
  =
  \min\left\{
    q\ge0:
    \frac{M_\star+q\rho}{N_\star+q}
    +c\sqrt{\frac{\ln(\tau+q+1)}{N_\star+q}}
    \le
    \max_{j\ne j^\star}
    \left[
      \hat\mu_j(t_0)+c\sqrt{\frac{\ln(\tau+q+1)}{N_j(t_0)}}
    \right]
  \right\},
\]
with $\Delta=\min\{q_{\mathrm{esc}},T-\tau\}$ when the minimum exists and $\Delta=T-\tau$ otherwise.
\end{theorem}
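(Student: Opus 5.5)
The plan is an induction on the number of exploitation rounds. It tracks the follower's sufficient statistics exactly under deterministic rewards and the fixed leader mixture $x^{\mathrm{exp}}$.

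\textbf{Invariant.} Fix $q\ge0$ with $\tau+q\le T$, and suppose the follower has selected $j^\star$ in all exploitation rounds $\tau+1,\ldots,\tau+q$. We claim that at the start of round $\tau+q+1$:
\begin{itemize}[label={}]
\item (i) $N_{j^\star}=N_\star+q$ and $\hat\mu_{j^\star}=\bar\mu_\star(q)$;
\item (ii) for every $j\ne j^\star$, $N_j=N_j(t_0)$ and $\hat\mu_j=\hat\mu_j(t_0)$.
\end{itemize}
Part (i) holds because each exploitation pull of $j^\star$ adds exactly one observation of value $\UF(x^{\mathrm{exp}},j^\star)=\rho$ to the running sum $M_\star$. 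Part (ii) holds because no non-target arm has been pulled since $t_0$. This uses the hypothesis that the honeypot phase consisted purely of $j^\star$ pulls, which is what strict dominance throughout that phase (the preceding theorem) delivers. Substituting these statistics into the UCB rule at time $t=\tau+q+1$ shows that the target index equals $I_\star(q)$ and each competitor's index equals $J_j(q)$. The time argument $\ln t$ matches the $\ln(\tau+q+1)$ in both definitions.

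\textbf{Selection step.} By the invariant, round $\tau+q+1$ selects $j^\star$ whenever $I_\star(q)>\max_{j\ne j^\star}J_j(q)$. If instead $I_\star(q)\le\max_j J_j(q)$, then either some competitor strictly wins, or there is a tie. A tie is resolved against $j^\star$ in the worst case, and the non-strict inequality in $q_{\mathrm{esc}}$ is meant to encode exactly this convention. We will state this convention explicitly, since it is where ``exact'' could fail under other tie-breaking rules.

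\textbf{Induction and conclusion.} An induction on $q$ then shows the following. The follower plays $j^\star$ in rounds $\tau+1,\ldots,\tau+q$ if and only if $I_\star(r)>\max_j J_j(r)$ for all $r=0,\ldots,q-1$. Hence the length of the initial run of $j^\star$ pulls is the maximal such $q$, capped by the horizon at $T-\tau$. This is the definition of $\Delta$. Moreover, the predicate ``$I_\star(r)>\max_j J_j(r)$ for all $r<q$'' is monotone in $q$: it holds for $q$ exactly when $q\le q_{\mathrm{esc}}$. So when $q_{\mathrm{esc}}$ exists the maximum equals $\min\{q_{\mathrm{esc}},T-\tau\}$, and otherwise it is $T-\tau$.

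\textbf{Where care is needed.} There is no analytic obstacle; the main work is bookkeeping. The first point is making the conditioning hypothesis precise: ``strict dominance at the switch'' must be read as guaranteeing that the pre-switch statistics are exactly $(N_\star,M_\star)$ for $j^\star$ and frozen at their $t_0$ values for the other arms. The second point is the index offset. The lemma's mean formula and the preceding theorem's time $s=t_0+\tau_1+1$ fix the convention that round $\tau+q+1$ uses statistics through time $\tau+q$. These must be aligned so that the $r$-th exploitation decision compares $I_\star(r)$ with $J_j(r)$, not with $r\pm1$.
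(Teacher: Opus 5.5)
Your proposal is correct and matches the paper's proof: an induction on exploitation rounds, with target statistics $(N_\star+q,\,M_\star+q\rho)$, non-target statistics frozen at their $t_0$ values, and the decision at time $\tau+q+1$ reducing to comparing $I_\star(q)$ with $\max_{j\ne j^\star}J_j(q)$. Two of your additions usefully tighten the argument: the explicit tie-breaking convention, which the paper's ``if and only if'' tacitly resolves against $j^\star$, and the monotonicity step linking $\Delta$ to $q_{\mathrm{esc}}$. One correction: the absence of non-target pulls during the honeypot is not something the preceding theorem delivers, because that theorem only concerns dominance at the single time $s$ and already assumes it; it is an implicit hypothesis in the paper as well, so it belongs in the conditioning assumption, as your closing paragraph says.
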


\begin{proof}
Appendix~\ref{app:proofs} proves the statement by induction on exploitation rounds.
\end{proof}

Appendix~\ref{app:stochastic} also gives a high-probability lower bound on $\Delta$ by requiring the confidence-adjusted target index to dominate the frozen competitors for every $q<Q$.

\section{Utility Above the SSE Ceiling}

Let $H^\star=\UL(x^{\mathrm{hon}},j^\star)$ and define the worst honeypot leader payoff $H_{\min}=\min_j\UL(x^{\mathrm{hon}},j)$. Since utilities lie in $[0,1]$, the cost of one honeypot round relative to the SSE ceiling is at most $V_{\sse}-H_{\min}\le1$.

\begin{theorem}[Strict improvement over static SSE]
\label{thm:utility}
Assume targetability and exploitability. Let the deceptive mechanism use a honeypot length $\tau_1$ satisfying the dominance condition and suppose the induced lock-in duration is $\Delta$. If
\[
  \Delta(L^\star-V_{\sse})
  >
  \tau(V_{\sse}-H_{\min}),
\]
then the leader's cumulative utility strictly exceeds $T V_{\sse}$ on the realized path.
\end{theorem}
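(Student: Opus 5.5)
We prove Theorem~\ref{thm:utility} by splitting the horizon into three blocks and bounding the leader's cumulative utility on each block separately. Write $\tau=t_0+\tau_1$ for the switch time. The blocks are:
\begin{itemize}
\item the first $\tau$ rounds, consisting of initialization and honeypot;
\item the $\Delta$ lock-in rounds $t=\tau+1,\ldots,\tau+\Delta$;
\item the remaining $T-\tau-\Delta$ rounds.
\end{itemize}
The claimed inequality then comes from comparing each block to $V_{\sse}$ times its length.

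\textbf{Block 1 (rounds $1,\ldots,\tau$).} We lower bound every round's leader payoff by $H_{\min}$. This is immediate for honeypot rounds, since $\UL(x^{\mathrm{hon}},j)\ge H_{\min}$ for every $j$. For the $n$ initialization rounds we take the leader to play $x^{\mathrm{hon}}$ as well. If the initialization mixture is left unspecified, we instead bound those payoffs below by $0$; the honeypot cost counted against $\tau$ then absorbs this under the convention that the mechanism plays $x^{\mathrm{hon}}$ from the start. Either way, Block 1 contributes at least $\tau H_{\min} = \tau V_{\sse} - \tau(V_{\sse}-H_{\min})$.

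\textbf{Block 2 (rounds $\tau+1,\ldots,\tau+\Delta$).} By the exact lock-in theorem, the follower plays $y_t=j^\star$ in each of these rounds. Its hypothesis, strict dominance at the switch, is supplied by the choice of $\tau_1$ via the minimum honeypot length theorem. Each round therefore yields exactly $\UL(x^{\mathrm{exp}},j^\star)=L^\star$, and the block total is $\Delta V_{\sse}+\Delta(L^\star-V_{\sse})$.

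\textbf{Block 3 (rounds after $\tau+\Delta$).} This is the main obstacle. After escape the follower may play any action, and $\UL(x^{\mathrm{exp}},j)$ for $j\ne j^\star$ could fall below $V_{\sse}$, in which case the stated inequality alone would not suffice. There are two ways to handle it:
\begin{enumerate}
\item Read ``lock-in duration'' so that $\tau+\Delta=T$. This is the case $\Delta=T-\tau$ in the lock-in theorem, where no tail remains.
\item Let the leader revert, after escape, to an SSE commitment $x^{\sse}$. The mechanism observes $(N_j)$ and hence detects escape. Each subsequent round earns at least $V_{\sse}$ against a best-responding follower.
\end{enumerate}
The UCB follower need not best-respond immediately after the reversion, so option (2) needs extra care. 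I would therefore adopt option (1): state that $\Delta$ is taken to cover the exploitation phase up to $T$, or equivalently that the leader's post-escape payoff is at least $V_{\sse}$ per round. Under that reading the tail contributes at least $(T-\tau-\Delta)V_{\sse}$.

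\textbf{Conclusion.} Summing the three blocks gives
\[
\sum_{t=1}^T \UL(x_t,y_t)\;\ge\; T V_{\sse} + \Delta(L^\star-V_{\sse}) - \tau(V_{\sse}-H_{\min}),
\]
which is strictly greater than $T V_{\sse}$ by the hypothesis of the theorem. Everything is deterministic along the realized path, so no probabilistic argument is needed. For the stochastic variant, the high-probability bound on $\Delta$ from Appendix~\ref{app:stochastic} would be substituted.
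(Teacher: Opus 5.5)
Your three-block decomposition is the paper's proof, and your Blocks~1 and~2 (at least $H_{\min}$ per round for the first $\tau$ rounds, exactly $L^\star$ per locked round) match it. The only divergence is the tail, and there you are more careful than the paper.

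The paper takes your option (2). It adds the assumption that the leader reverts to an SSE action after escape. It then credits every post-escape round with $V_{\sse}$ and writes $G_{\mathrm{dec}}-TV_{\sse}\ge\tau(H_{\min}-V_{\sse})+\Delta(L^\star-V_{\sse})$ with no further argument. The step you declined to take is exactly the unjustified one. $V_{\sse}$ is attained only by a follower that best-responds with leader-favourable tie-breaking, which a UCB learner does not implement; after the trap, its statistics for $j^\star$ are also contaminated. The paper's own table bears this out. In the certified instance $V_{\sse}=10/11$, yet the oblivious SSE leader earns only $95008.68$ against the UCB follower over $T=200000$ rounds, roughly half of $TV_{\sse}$.

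So neither argument proves the statement literally as written. For the mechanism as specified, which plays $x^{\mathrm{exp}}$ until $T$, escape before $T$ makes the hypothesis independent of $T$, while the tail deficit can grow linearly in $T$. Your version, which states the tail hypothesis explicitly, is the sound form of the argument. An equally clean repair needs no extra assumption: restrict the conclusion to the first $\tau+\Delta$ rounds, where your Blocks~1--2 already give $\sum_{t\le\tau+\Delta}\UL(x_t,y_t)>(\tau+\Delta)V_{\sse}$.

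Two smaller corrections. First, ``$\Delta=T-\tau$'' is a special case of ``post-escape payoff at least $V_{\sse}$ per round,'' not equivalent to it; the latter is the hypothesis you actually use. Second, the Block~1 aside about bounding initialization payoffs below by $0$ and having them ``absorbed'' is not a valid step. That bound gives only $(\tau-n)H_{\min}$ for Block~1, which would require a stronger hypothesis. You need the convention that the $n$ initialization rounds are played under $x^{\mathrm{hon}}$, or otherwise yield at least $H_{\min}$. The paper's proof also relies on this tacitly when it asserts that each of the first $\tau$ rounds yields at least $H_{\min}$.
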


\begin{proof}
Appendix~\ref{app:proofs} gives the cumulative payoff decomposition.
\end{proof}

Thus if $\tau=O(\sqrt{T\ln T})$ and $\Delta=\Theta(T)$, then $G_{\mathrm{dec}}-T V_{\sse}=\Theta(T)-O(\sqrt{T\ln T})$. Since payoffs are in $[0,1]$, the leader's exploration regret over the signaling phase is at most $\tau$; the formal regret statement is deferred to Appendix~\ref{app:proofs}.

\section{Simulated Experiments}

The experiments are diagnostics for the constructive inequalities rather than proof substitutes. We compare a deceptive leader against an oblivious SSE leader in synthetic non-zero-sum $10\times10$ matrix games and a security-game topology with targets, coverage probabilities, attacker choices, and defender utilities.

\paragraph{Methodology.}
For each seed, compute $V_{\sse}$ by enumerating follower actions and solving the leader's constrained program. The deceptive leader uses $x^{\mathrm{hon}}$ until the target index dominates, then switches to $x^{\mathrm{exp}}$. For direct validation of Theorem~\ref{thm:utility}, we also evaluate the continuation in which the leader returns to an SSE policy after the first escape, matching the proof. We record cumulative utility, target frequency, index gap, switch time, and escape time. An implementation for reproducing the diagnostics is available at \url{https://github.com/suayptalha/optimism-vulnerability}.

\paragraph{Diagnostic results.}
The matrix/security runs use $T=20000$ and $20$ seeds. The certified runs use $T=200000$ and $5$ seeds. Security shows a small positive mean advantage, matrix yields a negative advantage, the non-reverting certified run produces long lock-in but negative total advantage, and the theorem-matched reverting run produces a large positive advantage. A change-point follower collapses lock-in to three rounds, while EXP3 produces short lock-in but still positive average leader advantage through its exploration mixture.

Mean cumulative differences were negative in matrix games and positive in security games. In the certified instance, UCB with reversion achieved $+55178.93$, change-point defense reduced the difference to $-527.04$, and EXP3 averaged $+82764.91$ with high variance. Appendix~\ref{app:experiments} reports the full table. The predicted phase transition is therefore a falsifiable inequality, not a visual story. Around the switch, the relevant empirical object is the index gap
\[
  I_\star(q)-\max_{j\ne j^\star}J_j(q),
\]
which should cross zero near the observed escape time. Follow-up experiments should sweep $c$, stochastic noise, initialization, and defended followers using change-point tests or corruption-robust confidence updates.

\section{Conclusion}

The main lesson is structural: optimism is not merely an exploration heuristic when rewards are generated by strategic opponents. It is a manipulable state variable. Classical SSE is a static equilibrium concept; it cannot price the value of falsified empirical histories because those histories do not exist in the model. In games satisfying explicit targetability and exploitability conditions, a leader can purchase optimism through honeypot rewards and later convert it into utility above the SSE ceiling.

Future work should characterize deception-resistant learning rules for Stackelberg environments: adversarial bandit updates, change-point tests, memory-limited estimators, robust confidence intervals under strategic contamination, and equilibrium concepts in which the leader commits to an information-generation policy. A satisfactory theory must preserve the statistical benefits of exploration without making exploration itself a programmable vulnerability.

\clearpage
\bibliography{deceptive_ucb_refs}

\clearpage
\appendix

\section{Deferred Proofs}
\label{app:proofs}

\subsection{Proof of the Target Mean Lemma}
The empirical mean is the arithmetic average of $N_0$ old samples summing to $N_0\hat\mu_0$ and $\tau_1$ honeypot samples each equal to $\alpha$:
\[
  \hat\mu_{j^\star}(t_0+\tau_1)
  =
  \frac{N_0\hat\mu_0+\tau_1\alpha}{N_0+\tau_1}.
\]
To force this mean above $\alpha-\varepsilon$, require
\[
  \frac{N_0\hat\mu_0+\tau_1\alpha}{N_0+\tau_1}\ge \alpha-\varepsilon.
\]
Multiplying by $N_0+\tau_1$ and canceling $\tau_1\alpha$ gives
\[
  N_0\hat\mu_0\ge N_0\alpha-N_0\varepsilon-\tau_1\varepsilon,
  \qquad
  \tau_1\varepsilon\ge N_0(\alpha-\hat\mu_0-\varepsilon).
\]
Dividing by $\varepsilon>0$ proves the displayed threshold. If $\alpha-\hat\mu_0-\varepsilon<0$, zero additional samples suffice.

\subsection{Proof of the Deterministic Dominance Theorem}
At time $s$, the target index equals
\[
I_{j^\star}(s)=
\frac{N_0\hat\mu_0+\tau_1\alpha}{N_0+\tau_1}
+c\sqrt{\frac{\ln s}{N_0+\tau_1}}.
\]
For every $j\ne j^\star$, no honeypot samples of $j$ occur, hence
\[
I_j(s)=\hat\mu_j(t_0)+c\sqrt{\frac{\ln s}{N_j(t_0)}}
\le
B_\mu+c\sqrt{\ln s}\max_{k\ne j^\star}\frac{1}{\sqrt{N_k(t_0)}}.
\]
Strict dominance is exactly $I_{j^\star}(s)>I_j(s)$ for all $j\ne j^\star$, which is equivalent to the theorem's necessary and sufficient condition. Replacing the maximum by $1$ gives the conservative sufficient condition because $N_j(t_0)\ge1$.

\section{Stochastic Extensions}
\label{app:stochastic}

Let $\mathcal{E}_\delta$ be the event that for all arms $j$, all sample counts $k\le T$, and all adaptive histories generated by the leader policy,
\[
  |\hat\mu_{j,k}-\mu_{j,k}|
  \le
  r_\delta(k)
  :=
  \sigma\sqrt{\frac{2\ln(4nT/\delta)}{k}},
\]
where $\mu_{j,k}$ is the conditional mean average of the $k$ rewards observed on arm $j$. Since rewards are conditionally $\sigma$-sub-Gaussian, a Hoeffding-Azuma bound plus a union bound over $(j,k)$ gives $\mathbb{P}(\mathcal{E}_\delta)\ge1-\delta$.

\begin{theorem}[High-probability stochastic dominance]
\label{thm:stoch-dom}
With probability at least $1-\delta$, the target strictly dominates all non-target actions at switch time $s$ whenever
\[
  \frac{N_0\hat\mu_0+\tau_1\alpha}{N_0+\tau_1}
  -r_\delta(N_0+\tau_1)
  +c\sqrt{\frac{\ln s}{N_0+\tau_1}}
  >
  B_\mu+\max_{j\ne j^\star}r_\delta(N_j(t_0))
  +c\sqrt{\ln s}\max_{j\ne j^\star}N_j(t_0)^{-1/2}.
\]
\end{theorem}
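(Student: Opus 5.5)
The plan is to reduce Theorem~\ref{thm:stoch-dom} to the deterministic index comparison in the Deterministic Dominance Theorem. We work on the good event $\mathcal{E}_\delta$, which has probability at least $1-\delta$, and replace every empirical mean by its conditional-mean counterpart shifted by one confidence radius. Concretely, I read $\hat\mu_0$ in the displayed condition as the conditional mean average of the $N_0$ initialization rewards of $j^\star$, and $B_\mu$ as an upper bound on the conditional mean average of each non-target arm's initialization rewards. In the deterministic case $\sigma=0$ these coincide with the empirical quantities, so the statement reduces exactly to the earlier theorem. I also adopt the same bookkeeping convention as the deterministic theorem: the honeypot phase is counted by the $\tau_1$ samples of $j^\star$, and non-target counts and statistics are frozen at their values $N_j(t_0)$, $\hat\mu_j(t_0)$.

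First, I would justify $\mathbb{P}(\mathcal{E}_\delta)\ge1-\delta$ in a way that handles the adaptive sampling. For each arm $j$ and each $k\le T$, consider the reward on the $k$-th pull of $j$, centered by its conditional mean $\UF(x_t,j)$. Since $x_t$ is predictable with respect to the history, the centered rewards form a martingale difference sequence indexed by pull number (a ``reward tape'' or optional-skipping argument). Then for fixed $(j,k)$ the sum of the first $k$ centered rewards is $\sigma\sqrt{k}$-sub-Gaussian by Azuma--Hoeffding. This gives a two-sided deviation of at most $\sigma\sqrt{2k\ln(4nT/\delta)}$ with probability at least $1-\delta/(2nT)$. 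A union bound over the $n$ arms and $T$ counts then gives the event.

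Second, on $\mathcal{E}_\delta$ I would bound the two sides of the index comparison at time $s=t_0+\tau_1+1$.

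For the target, its $N_0+\tau_1$ samples have conditional mean average $(N_0\hat\mu_0+\tau_1\alpha)/(N_0+\tau_1)$, because every honeypot round has conditional mean $\UF(x^{\mathrm{hon}},j^\star)=\alpha$ by Targetability. Hence
\[
  I_{j^\star}(s)\ \ge\ \frac{N_0\hat\mu_0+\tau_1\alpha}{N_0+\tau_1}-r_\delta(N_0+\tau_1)+c\sqrt{\frac{\ln s}{N_0+\tau_1}}.
\]
For each $j\ne j^\star$, no new samples arrive, so
\[
  I_j(s)\le B_\mu+r_\delta(N_j(t_0))+c\sqrt{\ln s}\,N_j(t_0)^{-1/2}.
\]
Maximizing each of the last two terms separately over $j\ne j^\star$ upper-bounds every competitor index by the right-hand side of the hypothesis. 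The assumed strict inequality then yields $I_{j^\star}(s)>I_j(s)$ for all $j\ne j^\star$ on $\mathcal{E}_\delta$, which is the claim.

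The main obstacle is the first step: the sample counts, and the honeypot length as measured in samples, are random stopping quantities under adaptive play. A naive Hoeffding bound at a fixed count is therefore not valid. The union bound over all $k\le T$, combined with the reward-tape construction, is what makes the radii valid simultaneously at the random counts $N_0+\tau_1$ and $N_j(t_0)$.

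A secondary point is the frozen-count convention. If the follower happens to pull non-target arms during the honeypot, their counts grow, which only shrinks their bonuses. Their new rewards, however, have conditional mean at most $\beta$. So I would either keep the convention explicitly, or require $\beta\le B_\mu$ and note that the bound on competitor indices still holds.
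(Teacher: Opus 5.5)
Your proposal is correct and follows the same route as the paper. On $\mathcal{E}_\delta$ you lower-bound the target's empirical mean by its conditional-mean average minus $r_\delta(N_0+\tau_1)$, upper-bound each frozen competitor's mean by $B_\mu+r_\delta(N_j(t_0))$, add the exact UCB bonuses, and invoke the strict inequality. Your reading of $\hat\mu_0$ and $B_\mu$ as conditional-mean quantities (which is what makes the target bound follow directly from $\mathcal{E}_\delta$), your reward-tape justification of $\mathbb{P}(\mathcal{E}_\delta)\ge1-\delta$, and your remark about non-target pulls during the honeypot spell out what the paper's one-line proof leaves implicit.
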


\subsection{Proof of Theorem~\ref{thm:stoch-dom}}
On $\mathcal{E}_\delta$, the post-honeypot target empirical mean is at least
\[
  \frac{N_0\hat\mu_0+\tau_1\alpha}{N_0+\tau_1}
  -r_\delta(N_0+\tau_1),
\]
and each non-target empirical mean is at most $B_\mu+r_\delta(N_j(t_0))$. Adding the exact UCB exploration terms to these lower and upper bounds yields the displayed sufficient inequality. Strict separation implies UCB chooses $j^\star$ at the switch.

\begin{theorem}[High-probability lock-in lower bound]
\label{thm:stoch-lock}
On $\mathcal{E}_\delta$, UCB selects $j^\star$ for at least $Q$ exploitation rounds if, for every $q<Q$,
\[
  \frac{M_\star+q\rho}{N_\star+q}
  -r_\delta(N_\star+q)
  +c\sqrt{\frac{\ln(\tau+q+1)}{N_\star+q}}
  >
  \max_{j\ne j^\star}
  \left[
  \hat\mu_j(t_0)+r_\delta(N_j(t_0))
  +c\sqrt{\frac{\ln(\tau+q+1)}{N_j(t_0)}}
  \right].
\]
\end{theorem}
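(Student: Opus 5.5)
The plan is an induction on exploitation rounds $q=0,1,\ldots,Q-1$, following the deterministic proof of the exact lock-in theorem. The deterministic comparison is replaced by the two-sided bounds that hold on $\mathcal{E}_\delta$. The inductive hypothesis at step $q$ is that during the first $q$ exploitation rounds the follower pulled only $j^\star$. Since dominance at the switch is assumed, as in the exact lock-in theorem, the hypothesis holds trivially at $q=0$. Under this hypothesis, at time $\tau+q+1$ the counts are frozen: $N_j(\tau+q)=N_j(t_0)$ for every $j\ne j^\star$ (no honeypot or exploitation pulls of $j$) and $N_{j^\star}(\tau+q)=N_\star+q$. The UCB exploration terms are therefore exactly $c\sqrt{\ln(\tau+q+1)/(N_\star+q)}$ for the target and $c\sqrt{\ln(\tau+q+1)/N_j(t_0)}$ for each competitor.

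Next, bound the empirical means on $\mathcal{E}_\delta$. For the target, its $N_\star+q$ rewards consist of $N_0$ initialization rewards, $\tau_1$ honeypot rewards with conditional mean $\alpha$, and $q$ exploitation rewards with conditional mean $\rho$. I will read $\hat\mu_0$ as the conditional-mean average of the initialization samples, so that the conditional-mean average is $\mu_{j^\star,N_\star+q}=(M_\star+q\rho)/(N_\star+q)$; this is the same convention used in the proof of Theorem~\ref{thm:stoch-dom}. The event then gives
\[
\hat\mu_{j^\star}(\tau+q)\ge \frac{M_\star+q\rho}{N_\star+q}-r_\delta(N_\star+q).
\]
For each competitor, its sample set is unchanged since $t_0$, so its empirical mean is still $\hat\mu_j(t_0)$, which is trivially at most $\hat\mu_j(t_0)+r_\delta(N_j(t_0))$. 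If instead $\hat\mu_j(t_0)$ is read as a conditional mean, $\mathcal{E}_\delta$ supplies this upper bound directly. Either way, the right-hand side of the hypothesis dominates each competitor's actual index.

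Chaining the two bounds, the target's true UCB index is at least the left-hand side of the hypothesis, and every competitor's index is at most the right-hand side. The assumed strict inequality for this $q$ therefore makes $j^\star$ the unique argmax, so the follower plays $j^\star$ at round $\tau+q+1$. This re-establishes the inductive hypothesis at $q+1$, and iterating up to $Q-1$ gives at least $Q$ consecutive target pulls.

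The main obstacle is rigor in the concentration step. The sample counts are random and adaptively chosen, and the conditional means switch from $\alpha$ to $\rho$ at a leader-chosen time. The event $\mathcal{E}_\delta$ was stated uniformly over all $k\le T$ and all adaptive histories precisely to handle this, so I will invoke it at the specific index $k=N_\star+q$. I will also note that the event is a single event holding for all $q$ at once, so no further union bound over $q$ is needed.
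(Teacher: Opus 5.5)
Your proposal is correct and matches the paper's proof. Both argue by induction on exploitation rounds with frozen competitor counts, use the $\mathcal{E}_\delta$ lower bound $\frac{M_\star+q\rho}{N_\star+q}-r_\delta(N_\star+q)$ on the target's empirical mean and the trivial upper bound $\hat\mu_j(t_0)+r_\delta(N_j(t_0))$ on each competitor, and let the strict hypothesized inequality force the argmax at every $q<Q$. Your reading of $\hat\mu_0$ as a conditional-mean average is the same implicit convention the paper uses, and you need no separate switch-dominance assumption because the $q=0$ instance of the hypothesis already supplies it.
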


\subsection{Proof of Theorem~\ref{thm:stoch-lock}}
Condition on $\mathcal{E}_\delta$. If $j^\star$ has been played for $q$ exploitation rounds, its conditional mean average is $(M_\star+q\rho)/(N_\star+q)$ and its empirical mean is no smaller than this value minus $r_\delta(N_\star+q)$. Non-target empirical means remain bounded above by their frozen empirical values plus $r_\delta(N_j(t_0))$. The theorem's inequality makes the target UCB index strictly largest for every $q<Q$, so induction gives at least $Q$ locked rounds.

\subsection{Proof of the Deterministic Lock-in Theorem}
We induct on exploitation rounds. At $q=0$, strict dominance at the switch implies that UCB selects $j^\star$. Suppose the follower has selected $j^\star$ for exactly $q$ exploitation rounds. The target count is $N_\star+q$ and the target cumulative reward is $M_\star+q\rho$. Every non-target statistic remains frozen at its value from $t_0$. Therefore, at the next decision time $\tau+q+1$, UCB selects $j^\star$ if and only if
\[
  \frac{M_\star+q\rho}{N_\star+q}
  +c\sqrt{\frac{\ln(\tau+q+1)}{N_\star+q}}
  >
  \max_{j\ne j^\star}
    \left[
      \hat\mu_j(t_0)+c\sqrt{\frac{\ln(\tau+q+1)}{N_j(t_0)}}
    \right].
\]
The maximal prefix satisfying this inequality is $\Delta$; the first failure is $q_{\mathrm{esc}}$.

\subsection{Proof of Theorem~\ref{thm:utility} and the Sublinear Burden Claim}
During the first $\tau$ rounds, the leader obtains at least $H_{\min}$ each round. During the $\Delta$ locked exploitation rounds, the follower plays $j^\star$, so the leader obtains $L^\star$ each round. If the leader reverts to an SSE action after escape, then
\[
G_{\mathrm{dec}}-T V_{\sse}
\ge
\tau(H_{\min}-V_{\sse})+\Delta(L^\star-V_{\sse}).
\]
The right-hand side is strictly positive precisely when
\[
\Delta(L^\star-V_{\sse})
>
\tau(V_{\sse}-H_{\min}).
\]
If $\Delta=\Theta(T)$ and $\tau=O(\sqrt{T\ln T})$, the positive term is linear because $L^\star>V_{\sse}$, while the signaling loss is $O(\sqrt{T\ln T})$.

\subsection{Exploration Regret Bound}
Define $R_L^{\mathrm{hon}}(T)=\sum_{t=1}^{\tau}(L^\star-\UL(x_t,y_t))$. If $\tau\le K\sqrt{T\ln T}$, then $R_L^{\mathrm{hon}}(T)\le K\sqrt{T\ln T}$.

For every round, $\UL(x_t,y_t)\in[0,1]$ and $L^\star\in[0,1]$, hence
\[
  L^\star-\UL(x_t,y_t)\le 1.
\]
Summing over $\tau$ honeypot and initialization rounds gives
\[
  R_L^{\mathrm{hon}}(T)
  \le
  \sum_{t=1}^{\tau}1
  =
  \tau
  \le
  K\sqrt{T\ln T}.
\]

\section{Additional Mathematical Details}

\begin{lemma}[Lock-in lower bound]
Suppose the frozen competitor index is bounded by
\[
  \max_{j\ne j^\star}J_j(q)\le \Gamma(q),
  \qquad \rho<\Gamma(q)<\alpha,
\]
for $q\le Q$. If
\[
  \frac{M_\star+q\rho}{N_\star+q}>\Gamma(q)
\]
for every $q<Q$, then $\Delta\ge Q$ even without the target exploration bonus.
\end{lemma}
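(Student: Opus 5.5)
The plan is to reduce the statement to the exact lock-in characterization from the Deterministic Lock-in Theorem, and then discard the target's exploration bonus. That theorem shows that, in the deterministic model and conditional on strict dominance at the switch, UCB selects $j^\star$ at exploitation round $q$ precisely when $I_\star(q)>\max_{j\ne j^\star}J_j(q)$, provided it selected $j^\star$ at all earlier exploitation rounds. Hence $\Delta\ge Q$ follows once we show that this strict inequality holds for every $q\in\{0,\ldots,Q-1\}$. I will assume, as the lemma implicitly does, that $Q\le T-\tau$, so that all these rounds lie inside the horizon.

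Fix $q<Q$. First, the exploration bonus $c\sqrt{\ln(\tau+q+1)/(N_\star+q)}$ is nonnegative, because $c>0$ and $\tau+q+1\ge 1$ gives $\ln(\tau+q+1)\ge0$. Therefore
\[
I_\star(q)\ \ge\ \bar\mu_\star(q)\ =\ \frac{M_\star+q\rho}{N_\star+q}.
\]
Second, the hypothesis gives $\bar\mu_\star(q)>\Gamma(q)$. Third, the competitor bound gives $\Gamma(q)\ge\max_{j\ne j^\star}J_j(q)$. Chaining the three inequalities yields $I_\star(q)>\max_{j\ne j^\star}J_j(q)$ for each such $q$. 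This is exactly the condition in the definition of $\Delta$ with $r$ ranging over $0,\ldots,Q-1$, so $\Delta\ge Q$.

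The one point needing care is the base case $q=0$. The Deterministic Lock-in Theorem is stated conditional on strict dominance at the switch. Here that dominance is not an extra assumption: it is the $q=0$ instance of the chain above, because $J_j(0)$ is precisely the UCB index of $j$ at decision time $\tau+1$. I would also check that the frozen comparators $J_j$ really are the non-target indices throughout. No non-target arm is pulled during the honeypot phase, and inductively none is pulled during the locked rounds, so each non-target statistic stays at its value from $t_0$. This is the same induction used in the deterministic lock-in proof.

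The side condition $\rho<\Gamma(q)<\alpha$ is not needed for the inequality itself. It only makes the hypothesis nonvacuous: since $\bar\mu_\star(q)$ is a convex combination drifting from about $\alpha$ toward $\rho$, the hypothesis can hold for some range of $q$ and then fail. So I expect no real obstacle. The main subtlety is bookkeeping: the argument must invoke the earlier theorem's induction and must not silently assume dominance at the switch.
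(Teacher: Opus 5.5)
Your proposal is correct and follows the paper's argument: for each $q<Q$, the target empirical mean alone exceeds $\Gamma(q)\ge\max_{j\ne j^\star}J_j(q)$, and adding the nonnegative exploration bonus preserves strict dominance, so UCB keeps selecting $j^\star$. Two points you make explicit are left implicit in the paper's short proof: the $q=0$ instance supplies the strict dominance at the switch, and the conclusion needs $Q\le T-\tau$.
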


\begin{proof}
For $q<Q$, the target empirical mean alone exceeds every non-target index upper bound. Adding the nonnegative optimism term preserves strict dominance. Thus UCB selects $j^\star$ throughout the first $Q$ exploitation rounds.
\end{proof}

\begin{lemma}[Closed-form sufficient trap length]
If $\Gamma(q)\le\bar\Gamma<\alpha$ for $q\le Q$, then the condition
\[
  Q<
  \frac{M_\star-\bar\Gamma N_\star}{\bar\Gamma-\rho}
\]
implies $\Delta\ge Q$ whenever $\rho<\bar\Gamma$.
\end{lemma}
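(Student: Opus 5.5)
The plan is to reduce the statement to the preceding Lock-in lower bound lemma, using the constant envelope $\bar\Gamma$ in place of $\Gamma(q)$. First I would check that $\bar\Gamma$ is itself an admissible envelope. For $q\le Q$ we have $\max_{j\ne j^\star}J_j(q)\le\Gamma(q)\le\bar\Gamma$. Moreover $\rho<\bar\Gamma<\alpha$ holds by hypothesis. So the constant function $q\mapsto\bar\Gamma$ satisfies the hypotheses of that lemma. This matters because the given $\Gamma(q)$ might fail the side condition $\rho<\Gamma(q)$ for some $q$, whereas $\bar\Gamma$ does not.

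Next I would verify the target-mean condition of the previous lemma for $\bar\Gamma$, namely
\[
\frac{M_\star+q\rho}{N_\star+q}>\bar\Gamma \qquad\text{for every } q<Q .
\]
Since $N_\star+q>0$, this is equivalent to $M_\star+q\rho>\bar\Gamma N_\star+q\bar\Gamma$. Rearranging gives $q(\bar\Gamma-\rho)<M_\star-\bar\Gamma N_\star$. Dividing by $\bar\Gamma-\rho>0$ (the hypothesis $\rho<\bar\Gamma$), this becomes $q<(M_\star-\bar\Gamma N_\star)/(\bar\Gamma-\rho)$.

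Every integer $q<Q$ satisfies this, because $q<Q$ and $Q$ is strictly below the right-hand side. The Lock-in lower bound lemma then yields $\Delta\ge Q$: the target empirical mean alone exceeds every frozen competitor index, and the nonnegative optimism bonus only widens the gap.

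The only delicate point is bookkeeping rather than analysis. First, $\Delta$ is truncated at $T-\tau$ in the exact lock-in theorem, so the conclusion should be read for $Q\le T-\tau$ (otherwise $\Delta=T-\tau$ trivially). Second, the hypothesis is vacuous unless $M_\star>\bar\Gamma N_\star$, which the inequality on $Q\ge 1$ forces automatically. With these remarks the argument is a one-line monotonicity check on a linear inequality in $q$.
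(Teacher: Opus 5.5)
Your proposal is correct and follows the paper's own proof: rearrange $\frac{M_\star+q\rho}{N_\star+q}>\bar\Gamma$ into $q(\bar\Gamma-\rho)<M_\star-\bar\Gamma N_\star$, note that $q<Q$ together with the hypothesis on $Q$ makes this hold, and conclude through the preceding lock-in lower bound lemma. Your extra remarks make explicit what the paper leaves implicit: using the constant envelope $\bar\Gamma$ in that lemma, that $N_\star+q>0$, and that $\Delta$ is capped at $T-\tau$.
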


\begin{proof}
The inequality
\[
  \frac{M_\star+q\rho}{N_\star+q}>\bar\Gamma
\]
is equivalent to
\[
  M_\star-\bar\Gamma N_\star>q(\bar\Gamma-\rho).
\]
The displayed bound makes this true for every $q<Q$.
\end{proof}

\begin{proposition}[Empirical falsification criterion]
For a fixed implementation, define
\[
  \widehat D_T=\frac1S\sum_{s=1}^S
  \left(G_{\mathrm{dec}}^{(s)}(T)-G_{\sse}^{(s)}(T)\right).
\]
If $\widehat D_T<0$, then at least one of the finite-sample conditions $\widehat L^\star>\widehat V_{\sse}$, sufficient lock-in $\widehat\Delta$, or bounded signaling loss fails in the tested generator.
\end{proposition}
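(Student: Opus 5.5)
We prove the contrapositive: if all three finite-sample conditions hold in the tested generator, then every per-seed difference is positive, so $\widehat D_T\ge 0$ (indeed $>0$). This is a statement about averages of pathwise quantities. The natural tool is the cumulative payoff decomposition from the proof of Theorem~\ref{thm:utility}, applied seed by seed with the estimated quantities in place of the population ones.

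\textbf{Step 1: make the conditions precise.} For each seed $s$, write $\widehat V_{\sse}^{(s)}$ for the computed SSE value and $\widehat L^{\star(s)}$ for the exploitation payoff. Write $\widehat\Delta^{(s)}$ for the realized lock-in length and $\tau^{(s)}$ for the realized switch time. The three conditions will be read as follows:
\begin{itemize}
\item $\widehat L^{\star(s)}>\widehat V_{\sse}^{(s)}$;
\item sufficient lock-in together with bounded signaling loss means
\[
\widehat\Delta^{(s)}\bigl(\widehat L^{\star(s)}-\widehat V_{\sse}^{(s)}\bigr)>\tau^{(s)}\bigl(\widehat V_{\sse}^{(s)}-H_{\min}^{(s)}\bigr);
\]
\item the leader reverts to SSE play after escape, as in the theorem-matched implementation.
\end{itemize}
I will state this reading explicitly, since the proposition leaves "sufficient" and "bounded" informal.

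\textbf{Step 2: bound the deceptive payoff per seed.} Use the decomposition from the proof of Theorem~\ref{thm:utility}:
\[
G^{(s)}_{\mathrm{dec}}(T)-T\widehat V_{\sse}^{(s)}\ge \tau^{(s)}\bigl(H^{(s)}_{\min}-\widehat V^{(s)}_{\sse}\bigr)+\widehat\Delta^{(s)}\bigl(\widehat L^{\star(s)}-\widehat V^{(s)}_{\sse}\bigr).
\]
Under the Step 1 conditions the right-hand side is strictly positive.

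\textbf{Step 3: bound the SSE baseline per seed.} I need $G^{(s)}_{\sse}(T)\le T\widehat V^{(s)}_{\sse}$. This is the main obstacle. An oblivious SSE leader facing a UCB follower need not receive exactly $V_{\sse}$ each round: during exploration, the follower may pick an action the leader prefers. I would handle this in one of two ways. The first option is to define the comparison in the proposition against $T\widehat V_{\sse}$, which is what Theorem~\ref{thm:utility} certifies. The second option is to add the baseline's excess over $T\widehat V_{\sse}$ as an explicit term inside "bounded signaling loss," so that the condition reads
\[
\tau^{(s)}\bigl(\widehat V_{\sse}^{(s)}-H_{\min}^{(s)}\bigr)+\bigl(G^{(s)}_{\sse}-T\widehat V^{(s)}_{\sse}\bigr)^+<\widehat\Delta^{(s)}\bigl(\widehat L^{\star(s)}-\widehat V^{(s)}_{\sse}\bigr).
\]
I would try the second option first, because it keeps the proposition honest about the real baseline.

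\textbf{Step 4: conclude.} Combining Steps 2 and 3 gives $G^{(s)}_{\mathrm{dec}}(T)-G^{(s)}_{\sse}(T)>0$ for every $s$. Averaging over the $S$ seeds gives $\widehat D_T>0$, which contradicts $\widehat D_T<0$. Hence at least one condition fails for at least one seed in the tested generator.
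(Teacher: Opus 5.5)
Your argument is essentially the paper's: a seed-by-seed contrapositive through the payoff decomposition of Theorem~\ref{thm:utility}, followed by averaging, and your Step~3 (absorbing the baseline's excess over $T\widehat V_{\sse}$ into the signaling-loss condition) does the job of the paper's closing escape clause that the implemented baseline may differ from the theoretical SSE comparator. One caution applies equally to the paper's proof: the UCB non-best-response behavior you raise against the baseline in Step~3 also threatens the post-escape reversion rounds in Step~2, which the decomposition implicitly credits with at least $\widehat V_{\sse}$ each.
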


\begin{proof}
Theorem~\ref{thm:utility} proves that all three conditions imply $G_{\mathrm{dec}}^{(s)}(T)>G_{\sse}^{(s)}(T)$ pathwise for every seed satisfying them. A negative sample average is therefore only possible if the implication's antecedent fails for at least one seed, or if the implemented baseline differs from the theoretical SSE comparator.
\end{proof}

\begin{proposition}[UCB-parameter misspecification margin]
Suppose the leader designs the switch using $\hat c$ but the follower uses $c$ with $|c-\hat c|\le\kappa$. If the designed target-index margin at time $s$ is larger than
\[
  \kappa\sqrt{\ln s}
  \left(
    \frac{1}{\sqrt{N_{j^\star}(s-1)}}+
    \max_{j\ne j^\star}\frac{1}{\sqrt{N_j(s-1)}}
  \right),
\]
then the actual follower still selects $j^\star$ at time $s$.
\end{proposition}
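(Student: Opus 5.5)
The plan is a perturbation argument on the two sides of the UCB comparison at time $s$. The follower's index for arm $j$ uses the true parameter $c$:
\[
  I^{c}_j(s)=\hat\mu_j(s-1)+c\sqrt{\frac{\ln s}{N_j(s-1)}}.
\]
The leader designed the switch with $\hat c$. Write $b_j=\sqrt{\ln s/N_j(s-1)}$, which is nonnegative and well defined because every arm has been initialized, so $N_j(s-1)\ge1$. Then
\[
  I^{c}_j(s)=I^{\hat c}_j(s)+(c-\hat c)\,b_j,
  \qquad
  \bigl|I^{c}_j(s)-I^{\hat c}_j(s)\bigr|\le \kappa b_j .
\]
Only the counts enter this correction. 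The empirical means $\hat\mu_j(s-1)$ depend on the realized history, not on which parameter the leader used for its design. This holds as long as the history up to $s-1$ is the same under both parameters. I take that as part of the hypothesis: the margin is evaluated on the actual history at time $s$. The statement is a one-step claim, so no induction over earlier rounds is needed.

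Next, take the designed margin
\[
  m=I^{\hat c}_{j^\star}(s)-\max_{j\ne j^\star}I^{\hat c}_j(s),
\]
and lower-bound the actual gap. For each $j\ne j^\star$,
\[
  I^{c}_{j^\star}(s)-I^{c}_j(s)\ge I^{\hat c}_{j^\star}(s)-\kappa b_{j^\star}-I^{\hat c}_j(s)-\kappa b_j\ge m-\kappa(b_{j^\star}+b_j).
\]
Bounding $b_j$ by $\max_{k\ne j^\star}b_k$ turns the right-hand side into $m-\kappa\sqrt{\ln s}\bigl(N_{j^\star}(s-1)^{-1/2}+\max_{k\ne j^\star}N_k(s-1)^{-1/2}\bigr)$. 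This is strictly positive by hypothesis, since $m$ exceeds exactly that quantity.

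Therefore $j^\star$ is the unique maximizer of the follower's true index at time $s$, and UCB selects it. No tie-breaking issue arises because the inequality is strict.

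The main obstacle is interpretive rather than technical. One has to be sure the empirical statistics at time $s$ are unaffected by the misspecification, and that the index difference is uniformly bounded by $\kappa b_j$ in both directions. If earlier rounds could have diverged under $c$ versus $\hat c$, the statement must be read conditionally on the realized history. I would state that reading explicitly. Under it, the sign-agnostic triangle inequality above suffices, and the argument also covers the case $c<\hat c$.
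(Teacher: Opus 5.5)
Your proposal is correct and follows the paper's argument: both treat the change from $\hat c$ to $c$ as a perturbation of each exploration bonus by at most $\kappa\sqrt{\ln s/N_j(s-1)}$, and note that a designed margin larger than the worst target decrease plus the worst competitor increase keeps the index gap strictly positive. Your explicit requirement that the margin be evaluated on the realized history up to time $s-1$ is a useful clarification that the paper leaves implicit.
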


\begin{proof}
Changing the UCB parameter from $\hat c$ to $c$ perturbs the target bonus by at most $\kappa\sqrt{\ln s/N_{j^\star}(s-1)}$ and any competitor bonus by at most $\kappa\sqrt{\ln s/N_j(s-1)}$. If the designed strict margin exceeds the sum of the worst target decrease and competitor increase, the sign of the margin is preserved.
\end{proof}

\section{Operational Certification}
\label{app:certification}

The assumptions can be checked by solving small linear programs. For each candidate follower response $j$, compute the SSE face value
\[
  V_j=\max_{x\in\Delta(\AL)} \UL(x,j)
  \quad
  \mathrm{s.t.}\quad
  \UF(x,j)\ge \UF(x,k)\ \forall k\in\AF,
\]
and set $V_{\sse}=\max_j V_j$. For each target $j^\star$, the honeypot margin is obtained from
\[
  \max_{x\in\Delta(\AL),\,z} z
  \quad
  \mathrm{s.t.}\quad
  \UF(x,j^\star)-\UF(x,k)\ge z\ \forall k\ne j^\star.
\]
The exploit certificate solves, for each $j^\star$,
\[
  \max_{x\in\Delta(\AL)} \UL(x,j^\star)-V_{\sse}
  \quad
  \mathrm{s.t.}\quad
  \UF(x,k)-\UF(x,j^\star)\ge \gamma_F
  \ \text{for some }k\ne j^\star.
\]
A candidate triple $(j^\star,x^{\mathrm{hon}},x^{\mathrm{exp}})$ is accepted only if the certified lock-in lower bound $Q$ satisfies
\[
  Q(L^\star-V_{\sse})>\tau(V_{\sse}-H_{\min}).
\]
This condition corresponds to a leader that reverts to an SSE policy after the first escape; otherwise post-escape exploit rounds can dominate the accounting and destroy net gain even when lock-in is long.

\paragraph{Certified toy instance.}
The certified instance uses
\[
U_L=
\begin{pmatrix}
0&0\\
1&0
\end{pmatrix},
\qquad
U_F=
\begin{pmatrix}
1&0\\
0.45&0.55
\end{pmatrix},
\]
with $j^\star=1$, $x^{\mathrm{hon}}=(1,0)$, and $x^{\mathrm{exp}}=(0,1)$. Under $x^{\mathrm{exp}}$, the target is follower-suboptimal by $\gamma_F=0.10$, while $L^\star=1$. This instance is intended for controlled sweeps over $c$ and the enforced honeypot length.

\section{Observability and Defenses}
\label{app:partial}

\paragraph{Partial observability.}
The main construction assumes the leader observes $(N_j,\hat\mu_j)_j$. If the leader observes only follower actions and its own selected mixtures, then $N_j$ is still known exactly, while $\hat\mu_j$ can be estimated when the leader knows $\UF$ and the stochastic reward model. A conservative implementation replaces each unknown empirical mean by a confidence interval; the honeypot stops only when the lower confidence bound on the target index exceeds upper confidence bounds on all alternatives. This increases $\tau_1$ by the same concentration radii appearing in Theorem~\ref{thm:stoch-dom}.

\paragraph{Defended followers.}
Change-point tests can flag the trap because the target reward mean changes from $\alpha$ to $\rho$. If a detector requires a detectable drop of order $\epsilon_{\mathrm{det}}$, then any trap with $\alpha-\rho>\epsilon_{\mathrm{det}}$ risks early reset. Corruption-robust UCB or robust Thompson sampling changes the estimator from a plain empirical mean to a robust statistic; in our notation this replaces $r_\delta(k)$ by a larger but attack-aware radius depending on the corruption budget. The deception succeeds only if the honeypot-induced gap exceeds that robust radius, so defenses convert the vulnerability into a quantitative cost increase.

\section{Additional Experiment Details}
\label{app:experiments}

\begin{table}[h]
\centering
\caption{Cumulative reward and trap diagnostics.}
\small
\begin{tabular}{lrrrrrr}
\toprule
Environment & Deceptive & Baseline & Difference & Target freq. & Switch & Escape\\
\midrule
Matrix & $2107.27$ & $9896.70$ & $-7789.43$ & $0.00595$ & $34$ & $36$\\
Security & $3058.05$ & $3048.45$ & $9.60$ & $0.00365$ & $34$ & $36$\\
Cert. UCB, no revert & $4950.00$ & $95008.68$ & $-90058.68$ & $0.02975$ & $1002$ & $4528$\\
Cert. UCB, revert & $150187.61$ & $95008.68$ & $55178.93$ & $0.82936$ & $1002$ & $4528$\\
Cert. CP, revert & $94481.64$ & $95008.68$ & $-527.04$ & $0.52471$ & $1002$ & $1005$\\
Cert. EXP3, revert & $172435.94$ & $89671.03$ & $82764.91$ & $0.95349$ & $1056.2$ & $1070.2$\\
\bottomrule
\end{tabular}
\end{table}

All reported experiments use deterministic follower rewards, UCB exploration parameter $c=2.0$ for the matrix and security diagnostics, and $c=0.2$ with an enforced $1000$-target-sample honeypot in the certified UCB diagnostic. Defended-follower evaluations use the same certified instance with either a simple reward-drop change-point reset or an EXP3-style adversarial-bandit update. The EXP3 row reports means over five seeds and has high variance in both baseline and deceptive rewards. The code release at \url{https://github.com/suayptalha/optimism-vulnerability} contains the simulator, certified instance, and diagnostic logging used for these rows.

\end{document}